\documentclass[11pt]{article}

\usepackage[utf8]{inputenc}
\usepackage[T1]{fontenc}
\usepackage[margin=1in]{geometry}
\usepackage{hyperref}
\usepackage{url}
\usepackage{booktabs}
\usepackage{amsfonts}
\usepackage{amsmath}
\usepackage{amsthm}
\usepackage{nicefrac}
\usepackage{microtype}
\usepackage{xcolor}
\usepackage{natbib}

\newtheorem{informalthm}{Informal Theorem}
\newtheorem{theorem}{Theorem}[section]
\newtheorem{lemma}[theorem]{Lemma}

\newtheorem{cor}[theorem]{Corollary}

\newtheorem{example}{Example}

\DeclareMathOperator*{\Ex}{\mathbb{E}}

\title{Revenue Guarantee of Anonymous Pricing for Mixed Bidders:\\
Bridging Value and Utility Maximizers}

\author{Zhile Jiang \thanks{Department of Computer Science, Aarhus University, Denmark. Email: {\tt zhile@cs.au.dk}.}  \and Stratis Skoulakis \thanks{Department of Computer Science, Aarhus University, Denmark. Email: {\tt stratis@cs.au.dk}.}}

\date{}

\begin{document}

\maketitle

\begin{abstract}
Mechanism design increasingly faces heterogeneous environments containing both traditional utility maximizers and value maximizers, the latter of whom seek to maximize acquired value subject to Return-on-Spend constraints. Designing revenue-optimal mechanisms for such multi-dimensional settings is both computationally and theoretically challenging. To address this complexity, we investigate the revenue guarantees of \textit{Anonymous Pricing} (AP), a simple and practical mechanism, in heterogeneous markets composed of both value and utility maximizers.

By establishing a structural behavioral equivalence between value and utility maximizers, we show that AP, with an appropriately chosen price, achieves a \(1/e\) fraction of the optimal revenue. Our result improves upon the recent \( \frac{1}{2}(1 - 1/e) \) guarantee established by Deng et al.~(2022) for pure value maximizers, while extending it to mixed bidder types (both value and utility maximizers). We additionally establish an upper bound of \(1/2.62\) for AP.

Finally, we demonstrate a counterintuitive phenomenon: competition can reduce revenue with the presence of value maximizers. In particular, running a First-Price Auction with the exact same reserve price as AP can, in the presence of value maximizers, generate lower revenue than AP itself.
\end{abstract}

\section{Introduction}

Motivated by recent trends in online markets, particularly internet advertising with automated bidding agents, the mechanism design community has increasingly focused on value maximizers~\citep{ABM19}. Unlike traditional utility maximizers who seek to maximize quasi-linear utility, value maximizers aim to maximize their expected acquired value subject to a Return on Spend (ROS) constraint---requiring that total value be at least a fixed multiple of total spend. This behavior is pervasive in modern advertising platforms (e.g., Google Ads and Meta Ads), where advertisers deploy ``Target ROAS'' strategies across large-scale, automated campaigns. For this new class of agents, researchers have developed a relatively mature understanding of revenue-maximizing mechanisms: \citet{BDMMZ21}, for instance, demonstrated that mechanisms can fully extract the optimal welfare as revenue, a ``full surplus extraction'' property unavailable against traditional utility maximizers.

In practice, however, value maximizers and utility maximizers may coexist. This heterogeneity changes the mechanism design problem. The designer must now contend with agents whose private information is multi-dimensional, encompassing both valuations and behavioral types, rendering the clean, single-parameter theory inapplicable. Characterizing incentive-compatible mechanisms becomes substantially harder, as monotonicity conditions no longer suffice, and computing a revenue-optimal mechanism is known to be challenging in general~\citep{BDMMZ24}.

These challenges make a compelling case for simple mechanisms. The ``Simple versus Optimal'' paradigm, pioneered by \citet{HR09}, advocates for mechanisms whose structural simplicity delivers three practical benefits simultaneously. First, a low-dimensional parameter space enables the seller to learn near-optimal parameters from data via standard no-regret algorithms, without requiring distributional prior knowledge, a critical advantage in heterogeneous markets where agent types may be unknown. Second, simplicity promotes transparency: participants can readily understand the rules, leading to predictable, stable equilibrium behavior even when agent populations are mixed. Third, and most crucially for our setting, simple mechanisms can sidestep the multi-dimensionality of the problem entirely.

Anonymous Pricing, posting a single uniform price, is the canonical example: because the payment is fixed, the seller's revenue decouples from the agents' private types and depends only on the aggregate selling probability, making AP robust to agent heterogeneity by design. \citet{DMZ22} demonstrated that this robustness extends to the prior-free setting via no-regret learning against pure value maximizers; our work builds on and extends their framework to the heterogeneous case.

\subsection{Our Results and Contribution}

In this work, we focus on anonymous pricing, which is the simplest possible mechanism. We consider the heterogeneous cases where the utility maximizers, the value maximizers, and even their combination, called the hybrid maximizers, co-exist.

\paragraph{Revenue Guarantee of Anonymous Pricing.}
We establish the revenue guarantees of anonymous pricing in these heterogeneous environments. Prior work by \cite{DMZ22} established revenue bounds for value maximizers. Our results extend to heterogeneous populations and improve both the lower and upper bounds of \cite{DMZ22}. Specifically, we improve the approximation ratio interval from $[0.316, 0.479]$ to $[1/e, 1/C^\star]$, where $C^\star \approx 2.62$. We note that our upper bound construction utilizes instances with pure value maximizers.

\begin{informalthm}
    For the heterogeneous environments, the expected revenue of the optimal anonymous pricing mechanism is at least a $1/e$ fraction of the optimal mechanism's expected revenue, and is at most $1/C^\star$ of the optimal mechanism's expected revenue for at least one instance with value maximizers only.
\end{informalthm}

We obtain these bounds by providing a structural connection between value and utility maximizers. We show a reduction from value maximizers to utility maximizers with regular distributions under pricing schemes. This reduction not only improves the previous bounds but also allows for simpler proofs by adapting standard techniques like the ex-ante relaxation framework \citep{AHNPY19} and upper-bound construction with pure utility maximizers \citep{JLTX19}.

\begin{informalthm}
    Under any pricing mechanism, any value maximizer or any hybrid maximizer drawn from a continuous valuation distribution is behaviorally equivalent to a utility maximizer with a transformed, regular valuation distribution.
\end{informalthm}

Building upon our improvement on revenue guarantee of anonymous pricing and existing frameworks \citep{DMZ22}, we show that there exists a prior-free mechanism based on anonymous pricing.

\begin{informalthm}
    In the absence of prior distributional knowledge, there exists a learning-based, prior-free mechanism that extracts at least a $1/e$ fraction of the optimal revenue in the limit.
\end{informalthm}

\paragraph{Revenue Degradation from Competition.}
Given the fact that anonymous pricing provide constant approximation, another question is: Can the mechanism designer always impose an auction using the optimal anonymous price as a reserve without harming overall revenue? In the setting with only utility maximizers, introducing any standard auction satisfying individual rationality will not decrease the revenue, since an agent will always bid at least the reserve price whenever its valuation is above the reserve, ensuring the mechanism extracts at least as much revenue as the posted price. However, we demonstrate that this intuition fails in the presence of value maximizers.
We provide an example showing that this phenomenon can occur even in environments populated solely by value maximizers.

\begin{informalthm}
    Unlike utility-maximizing settings, introducing competition can strictly degrade revenue. There exist instances with pure value maximizers where an anonymous posted price extracts strictly more revenue than a standard first-price auction using the optimal anonymous price as a reserve.
\end{informalthm}

\subsection{Further Related Work}

\paragraph{Mechanism Design with Value Maximizer.}
\citet{ABM19} initiated the study of mechanism design with value maximizers, showing that incentive-compatible auctions cannot extract more than $1/2$ of the optimal welfare. \citet{BDMMZ21} proved that the first-best revenue is achievable when either the valuation or the ROS target is public, but not when neither is public. \citet{DZ21} extended this to a dynamic, prior-independent setting, achieving $\tilde{O}(T^{2/3})$ regret against the first-best benchmark. For the single-agent case where both the valuation and the ROS target are private, \citet{BDMMZ24} provides the revenue-optimal mechanism under the ex-ante ROS constraint. In the heterogeneous setting with coexisting utility and value maximizers, \citet{LZZLYLCW23} study welfare maximization and \citet{BLWXY25} characterize the optimal mechanism for a single mixed bidder. We refer to \citet{ABBBDFGLLM24} for a comprehensive survey.

\paragraph{Revenue Guarantee of Simple Mechanisms.}
Considering environments with pure utility maximizers, \citet{AHNPY19} established the ex-ante relaxation framework and proved that anonymous pricing achieves at least $1/e$ of the optimal revenue for regular utility maximizers.
\cite{JLTX19} provides an upper-bound construction showing that the worst-case revenue guarantee is no better than $C^\star\approx 2.62$. And its tightness is later proven by \cite{JLQTX19}.  There is also a rich literature about revenue guarantees for other simple mechanisms or in multi-item environments, see e.g. \citep{BK96, CHK07, HR09, CHMS10, HHT14, FILS15, CFHOV17, FHL19, JJLZ21, CJK25}. In the case of pure value maximizers, \citet{DMZ22} establishes that sequential posted pricing achieves $1/2$ of the optimal revenue while anonymous pricing can only give $\frac{1}{2}(1-1/e) \approx 0.316$ of the optimal revenue. They also provide an upper bound of approximately $0.479$.

\section{Preliminaries}
\label{sec:pre}
We consider a seller offering a single indivisible item to a set of potential buyers denoted by $\mathcal{N} = \{1,2,\dots,n\}$. Each agent $i \in \mathcal{N}$ is characterized by two private parameters: a deterministic behavioral type $\theta_i \in [0,1]$ and a private valuation $v_i$. The deterministic type $\theta_i$ defines the agent's fundamental objective, while the valuation $v_i$ is a Bayesian variable. We assume $v_i$ is drawn independently from a public, continuous cumulative distribution function $F_i$ with a well-defined probability density function $f_i$ over a non-negative support.

\noindent \textbf{Mechanisms.}
A mechanism $\mathcal{M}$ in this environment is defined by an allocation rule $\mathbf{x}$ and a payment rule $\mathbf{p}$, which map from a reported bid profile $\boldsymbol{\beta} = (\beta_1, \dots, \beta_n)$ to outcomes. Here, a bid $\beta_i = (\hat{\theta}_i, \hat{v}_i)$ is a reported pair of the agent's type and valuation. For any reported profile $\boldsymbol{\beta}$, the allocation rule $\mathbf{x}(\boldsymbol{\beta}) = (x_1(\boldsymbol{\beta}),\dots,x_n(\boldsymbol{\beta}))$ specifies the probability that each buyer $i$ receives the item, subject to the feasibility constraint $x_i(\boldsymbol{\beta}) \in [0,1]$ and $\sum_{i=1}^{n}x_i(\boldsymbol{\beta}) \leq 1$. Simultaneously, the payment rule $\mathbf{p}(\boldsymbol{\beta}) = (p_1(\boldsymbol{\beta}), \dots, p_n(\boldsymbol{\beta}))$ determines the non-negative monetary transfer $p_i(\boldsymbol{\beta}) \geq 0$ that buyer $i$ must effectively pay to the seller.

Assuming all other buyers follow a given bidding strategy profile $\mathbf{s}_{-i}$, we define the interim allocation $x_i(\beta_i) = \Ex_{\mathbf{v}_{-i}}[x_i(\beta_i, \mathbf{s}_{-i}(\mathbf{v}_{-i}))]$ and the interim payment $p_i(\beta_i) = \Ex_{\mathbf{v}_{-i}}[p_i(\beta_i, \mathbf{s}_{-i}(\mathbf{v}_{-i}))]$. These represent the expected probability of winning and the expected payment for buyer $i$ when reporting $\beta_i$. Consequently, the total expected revenue extracted by the seller is entirely governed by this payment rule.

To evaluate the outcome of the mechanism, we consider the revenue of the mechanism when the agents are in a Bayesian Nash equilibrium. Since this paper does not use any property of equilibria, we omit the definition of the Bayesian Nash equilibrium. Readers only need to remember that any Bayesian Nash equilibrium can be represented as a vector $\mathbf{s}$, which is a collection of mappings from valuation to bids. And it will not break the feasibility constraint on the allocation rule. Formally, we defined the revenue as:
\begin{align*}
    \text{REV}_{\mathcal{M}}(\mathbf{s}) = \Ex_{\mathbf{v} \sim \mathbf{F}}\left[\sum_{i=1}^n  p_i(s_i(v_i))\right].
\end{align*}
In this paper, our results focus on simple mechanisms like anonymous pricing. Consequently, we will largely abstract away from the complex, point-wise allocation details of mechanisms. Instead, our analysis will heavily rely on these expected, interim quantities (and the resulting expected revenue), which naturally lend themselves to the ex-ante relaxation techniques we introduce later.

\noindent \textbf{Types of Agents.}
In general, agent $i$ with type $\theta_i$ employs a bidding strategy $s_i(v_i)$ to maximize the following generalized objective:
\begin{align*}
    u_i(s_i) = \Ex_{v_i \sim F_i} [v_i \cdot x_i(s_i(v_i)) - \theta_i \cdot p_i(s_i(v_i))]
\end{align*}
This objective captures the difference between the expected value gained from the mechanism and the expected discounted payment. The deterministic discount factor $\theta_i$ controls how sensitive the agent is to monetary spend. This unified formulation covers three distinct classes of agents:
\begin{itemize}
    \item \textbf{Utility Maximizer ($\theta_i=1$):} The agent seeks to maximize standard quasi-linear utility.
    \item \textbf{Value Maximizer ($\theta_i=0$):} The agent strictly maximizes their expected gained value $\Ex[v_i \cdot x_i]$. As detailed below, this pure volume-maximizing behavior is typically regulated by an overall Return on Spend constraint.
    \item \textbf{Hybrid Maximizer ($0<\theta_i<1$):} Represents an intermediate strategy where the agent prioritizes value accumulation but retains a fractional sensitivity to payment costs. We also assume these agents are subject to the Return on Spend constraint.
\end{itemize}

\noindent \textbf{Return on Spend (ROS) Constraint.}
In practice, pure value maximizers ($\theta_i=0$) do not bid with infinite budgets. Their volume-maximizing behavior is subject to a strict Return on Spend (ROS) constraint. Formally, an ROS constraint requires that the total expected value obtained by the agent must be at least a target multiple, $\gamma_i$, of their total expected payment:
    $$\Ex[v_i \cdot x_i(s_i(v_i))] \geq \gamma_i \cdot \Ex[p_i(s_i(v_i))].$$
This target $\gamma_i\ge1$ can also be the agent's private information, alongside their type $\theta_i$ and valuation $v_i$. A mechanism can theoretically adjust the allocation and the payment based on the reported $\gamma_i$.

One might be tempted to simplify the problem by normalizing valuations or payments to absorb these parameters (e.g., reducing the setting to $\theta_i \in \{0,1\}$ and $\gamma_i = 1$). However, because an agent's true $\theta_i$ and $\gamma_i$ are private information, such type-dependent normalization is technically impossible for the designer to implement. Furthermore, even as a purely analytical tool, normalization fails under anonymous pricing: applying different scaling factors to different agents transforms a single uniform price into varying effective prices, which artificially distorts the aggregate revenue and selling probabilities. Therefore, to correctly evaluate anonymous pricing, we must directly analyze the full continuous spectrum of $\theta_i$ and $\gamma_i$.

\paragraph{Regular Distributions.} For utility maximizers or hybrid maximizers, we use the standard regularity assumption. A valuation distribution $F$ is regular if and only if its virtual valuation $v_F(z)=z-\frac{1-F(z)}{f(z)}$ is non-decreasing, or equivalently, its quantile-revenue curve $R(q)=q\cdot F^{-1}(1-q)$ is concave.

\section{Anonymous Pricing}

In this paper, we focus on the revenue guarantee of the simplest mechanism called \emph{Anonymous Pricing} (AP). The execution of $\mathrm{AP}$ with price $p$ is straightforward: the seller posts a single, uniform price $p \ge 0$ for the item. As agents arrive sequentially, the item is allocated to the first agent whose objective allows them to accept $p$. Upon a successful transaction, the seller collects a payment of exactly $p$, and the mechanism terminates. The behavior of agents is simple under AP, and the seller's expected revenue equals the posted price multiplied by the probability of sale. We denote the revenue of AP with price $p$ as $\mathrm{REV}_{\mathrm{AP}(p)}$.

The main advantage of anonymous pricing in our setting is that its revenue depends only on the ex-ante selling probability, making it less sensitive to the underlying bidder heterogeneity. Because the payment is strictly fixed to $p$, the seller's expected revenue is entirely decoupled from the specific type or valuation of the winning agent. Instead, it depends exclusively on the overall probability that the item is sold to any agent. Consequently, the expected revenue is independent to the agents' arrival sequence or the mechanism's tie-breaking rules.

We observe that the optimal decision-making process for these heterogeneous agents under AP naturally follows a threshold strategy. Specifically, any agent $i$ accepts the price $p$ if and only if their private valuation $v_i$ is at least a specific threshold $\tau_i(p)$, which is uniquely governed by their type $\theta_i$ and valuation distribution $F_i$:
\begin{itemize}
    \item \textbf{Utility Maximizer ($\theta_i=1$):} The agent seeks to guarantee a non-negative quasi-linear utility on every transaction. Thus, their threshold is simply the price itself: $\tau_i(p) = p$.
    \item \textbf{Value Maximizer ($\theta_i=0$):} Driven by volume rather than individual bid utility, this agent accepts the price as long as their expected valuation, conditional on winning, covers the price to satisfy their ROS constraint. Define their threshold $\tau_i(p)$ as the minimum value satisfying  $\Ex_{v_i\sim F_i}[v_i \mid v_i \ge \tau_i(p)] = \gamma_i\cdot p$. If no such value exists (i.e., the price $p$ guarantees a loss), the agent never participates, which is formally denoted as $\tau_i(p) = \infty$.
    \item \textbf{Hybrid Maximizer ($0 < \theta_i < 1$):} Bridging the two extremes, a hybrid agent's optimal strategy is governed by a composite threshold: $\tau_i(p) = \max(\theta_i \cdot p, \alpha_i)$, where $\alpha_i$ is similarly defined as the minimum value such that  $\Ex_{v_i\sim F_i}[v_i \mid v_i \ge \alpha_i] = \gamma_i\cdot p$ (with $\alpha_i = \infty$ if no such value exists). This reflects a threshold considering both the ROS constraint and their fractional sensitivity to the payment.
\end{itemize}

\subsection{Equivalence among different types of agents}

We are now ready to establish the core structural equivalence that allows us to bridge the gap between pure value-maximizing and utility-maximizing behaviors. Intuitively, because the optimal strategy for any agent under anonymous pricing reduces to a threshold rule, the seller's revenue is entirely determined by the probability distribution of this threshold. By constructing a distribution that matches the induced threshold behavior, we can explicitly construct an equivalent cumulative distribution function. If a standard value maximizer were to draw their valuation from this virtual distribution, their purchasing behavior and the revenue they generate would become entirely indistinguishable from the original value maximizer.

\begin{lemma}
    \label{lem:val-equivalence}
    For any value maximizer with a continuous valuation distribution $F$ and a private ROS target $\gamma \ge 1$, there exists a corresponding standard utility maximizer drawing their valuation from a regular distribution $G$. Under any anonymous price $p > 0$, both agents exhibit identically matching purchasing probabilities and generate the exact same expected revenue.
\end{lemma}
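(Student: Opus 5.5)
The plan is to match selling probabilities price by price and then verify regularity through the revenue curve. For the value maximizer, let $\tau(p)$ be the threshold defined above, so the purchase probability at price $p$ is $q(p) = 1 - F(\tau(p))$ (equal to $0$ if $\tau(p)=\infty$). A utility maximizer with distribution $G$ buys at price $p$ with probability $1-G(p)$. We will therefore \emph{define}
$$G(p) := F(\tau(p)) \quad (p>0),$$
and check three things: that $G$ is a valid CDF, that the purchase probabilities then coincide by construction, and that $G$ is regular. Revenue equality follows immediately, since under AP each agent's contribution to the sale event depends only on whether its value clears its threshold, and the seller collects exactly $p$ per sale.

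For the first point, write quantiles $q\in(0,1]$ and let $M(q) = \Ex[v \mid v \ge F^{-1}(1-q)] = \frac{1}{q}\int_0^q F^{-1}(1-s)\,ds$ be the conditional mean of the top-$q$ quantile. This is an average of a non-increasing function over $[0,q]$, so it is non-increasing and continuous in $q$. Since $\tau(p)$ is the smallest threshold with conditional mean $\gamma p$, the top quantile $q(p)$ is the largest $q$ with $M(q) = \gamma p$. Hence $q(p)$ is non-increasing in $p$: it equals $1$ for $\gamma p \le \Ex[v]$ and drops to $0$ once $\gamma p$ exceeds $\sup M = \sup \mathrm{supp}(F)$. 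Consequently $G$ is non-decreasing, takes values in $[0,1]$, and $G(p)=0$ for $p\le \Ex[v]/\gamma$. This gives a valid distribution; it may have an atom or a flat stretch, and we treat the support carefully there.

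The main step is regularity, which we verify through the concavity of the revenue curve $R_G(q) = q\cdot G^{-1}(1-q)$. By construction the price at which $G$ sells with probability $q$ is $p(q) = M(q)/\gamma$, so
$$R_G(q) = \frac{q\,M(q)}{\gamma} = \frac{1}{\gamma}\int_0^q F^{-1}(1-s)\,ds.$$
Its derivative is $F^{-1}(1-q)/\gamma$, which is non-increasing in $q$, so $R_G$ is concave and $G$ is regular.

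The main obstacle is the case where $M$ is not strictly decreasing, for instance when $F$ has flat regions of the quantile function near the top, or when $q(p)$ is discontinuous. In that case $G^{-1}$ and the ``largest $q$'' selection must be handled through generalized inverses. We will address this by working directly with the concave function $q\mapsto \frac1\gamma\int_0^q F^{-1}(1-s)ds$: we take $G$ to be the distribution whose revenue curve it is, and then check that the minimal-threshold convention yields exactly $1-G(p)$ at every $p>0$, including at jump points. Continuity of $F$ ensures that $F^{-1}$ causes no additional atoms in $M$.
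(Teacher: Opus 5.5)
You use the same construction $G(p)=F(\tau(p))$ as the paper, and your argument is correct, but you prove regularity by a different route.

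The paper works in price space. It differentiates $A(\tau)(1-F(\tau))=\int_\tau^\infty xf(x)\,dx$ and the binding condition $A(\tau(z))=\gamma z$ to compute $g$, and finds the virtual value $v_G(z)=\tau(z)/\gamma$, which is monotone. You work in quantile space instead. You observe that the binding ROS constraint makes $G$'s revenue curve equal to $1/\gamma$ times the cumulative-value curve of $F$, namely $R_G(q)=\frac{1}{\gamma}\int_0^q F^{-1}(1-s)\,ds$, and this is concave for \emph{any} $F$.

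The two facts are dual: your $R_G'(q)=F^{-1}(1-q)/\gamma$ is exactly the paper's $\tau(z)/\gamma$ evaluated at $z=G^{-1}(1-q)$. Your version, however, needs no density or differentiability. The paper's step $\tau'(z)=\gamma/A'(\tau(z))$ divides by $A'(\tau)\propto f(\tau)$, and it treats gaps in the support only informally. In your formula a gap just makes $F^{-1}(1-q)$ jump down, which keeps $R_G$ concave. Your formulation also makes the converse in Corollary~\ref{cor:u2v} immediate: $F^{-1}(1-q)=\gamma R_G'(q)$, i.e.\ $F$ is the law of $\gamma\, v_G(z)$ for $z\sim G$. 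The paper's route, in exchange, gives the interpretable closed form $v_G=\tau/\gamma$.

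The obstacle you flag in your last paragraph does not actually arise. Continuity of $F$ makes $F^{-1}$ strictly increasing, so $M$ is continuous and strictly decreasing on $(0,1]$. Hence $q(p)=M^{-1}(\gamma p)$ for $\Ex[v]<\gamma p<\sup\mathrm{supp}(F)$, $G$ has no atoms or flat stretches, and $G^{-1}(1-q)=M(q)/\gamma$ holds exactly. This assumes $\Ex[v]<\infty$, which the paper also uses implicitly.
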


\begin{proof}
    Let $v \sim F$ be the valuation of the value maximizer. Given a posted price $p$ and an ROS target $\gamma \ge 1$, the agent employs a threshold strategy: purchase if $v \ge \tau$. The agent's objective is to maximize the expected value $V(\tau) = \int_{\tau}^{\infty} x f(x) dx$, subject to the ROS constraint $V(\tau) \ge \gamma \int_{\tau}^{\infty} p f(x) dx$.

    Dividing the constraint by the purchase probability $1 - F(\tau)$, we obtain the conditional expectation form: $\Ex[v \mid v \ge \tau] \ge \gamma p$. Let us define the conditional expectation function $A(\tau) = \Ex[v \mid v \ge \tau]$. Because the agent seeks to maximize the expected value (which means maximizing the allocation probability $1 - F(\tau)$ by pushing $\tau$ as low as possible), the optimal threshold $\tau(p)$ will bind the constraint tightly. To formally account for cases where $F$ might be flat in certain intervals (meaning $f(x)=0$ and $A(\tau)$ is locally constant), we define the threshold via the infimum:
    \begin{align*}
        \tau(p) = \inf \{ \tau \in \text{supp}(F) \mid A(\tau) \ge \gamma p \}
    \end{align*}
    By definition, the expected revenue from this value maximizer is $R_{\text{VM}}(p) = p \cdot (1 - F(\tau(p)))$.

    Now, consider a standard utility maximizer with valuation $z$ drawn from a constructed CDF $G$. Facing the same price $p$, this agent purchases if $z \ge p$, generating revenue $R_{\text{UM}}(p) = p \cdot (1 - G(p))$.
    To ensure identical purchasing probabilities (and thus identical revenue) for any price $p$, we explicitly construct $G$ such that:
    \begin{align*}
        1 - G(p) = 1 - F(\tau(p)) \implies G(p) = F(\tau(p))
    \end{align*}

    We must now prove that this constructed distribution $G$ is regular, i.e., its virtual valuation function $v_G(z) = z - \frac{1 - G(z)}{g(z)}$ is monotonically non-decreasing.

    First, we differentiate the identity $A(\tau)(1 - F(\tau)) = \int_{\tau}^{\infty} x f(x) dx$ with respect to $\tau$:
    \begin{align*}
        A'(\tau)(1 - F(\tau)) - A(\tau)f(\tau) = -\tau f(\tau) \implies A'(\tau) = \frac{f(\tau)(A(\tau) - \tau)}{1 - F(\tau)}
    \end{align*}
    Notice that $A'(\tau) \ge 0$, confirming $A(\tau)$ is non-decreasing.

    Next, taking $z = p$ as the price variable, the PDF of $G$ is $g(z) = \frac{d}{dz}F(\tau(z)) = f(\tau(z))\tau'(z)$. From the binding condition $A(\tau(z)) = \gamma z$, differentiating both sides yields $A'(\tau(z))\tau'(z) = \gamma$, so $\tau'(z) = \frac{\gamma}{A'(\tau(z))}$.
    Substituting $A'(\tau)$ into this expression gives the density of $G$:
    \begin{align*}
        g(z) = f(\tau(z)) \frac{\gamma (1 - F(\tau(z)))}{f(\tau(z))(A(\tau(z)) - \tau(z))} = \frac{\gamma (1 - G(z))}{A(\tau(z)) - \tau(z)}
    \end{align*}
    The inverse hazard rate of $G$ simplifies:
    \begin{align*}
        \frac{1 - G(z)}{g(z)} = \frac{A(\tau(z)) - \tau(z)}{\gamma}
    \end{align*}
    Substituting $A(\tau(z)) = \gamma z$, we get:
    \begin{align*}
        \frac{1 - G(z)}{g(z)} = z - \frac{\tau(z)}{\gamma}
    \end{align*}
    Finally, substituting this into the virtual valuation formula for $G$, we obtain:
    \begin{align*}
        v_G(z) = z - \left( z - \frac{\tau(z)}{\gamma} \right) = \frac{\tau(z)}{\gamma}
    \end{align*}
    By our generalized inverse construction $\tau(z) = \inf\{\tau \mid A(\tau) \ge \gamma z\}$, as the required threshold target $\gamma z$  increases, the minimum satisfying threshold $\tau(z)$ must also increase (it leaps over any flat regions of $F$). Therefore, $v_G(z) = \tau(z)/\gamma$ is increasing in $z$, completing the proof that $G$ is regular.
\end{proof}

Building upon the structural equivalence established for pure value maximizers, we can naturally extend this analytical framework to hybrid maximizers. A hybrid agent bridges the gap between the two extremes. As discussed earlier, their decision-making is governed by a composite threshold $\max(\theta \cdot p, \alpha)$. Intuitively, this $\max$ operator partitions the price space into two distinct regimes: one where the agent's behavior is bottlenecked by their Return on Spend (ROS) constraint, and another where they are constrained by their fractional sensitivity to the price. As $\theta \to 1$, the hybrid agent's behavior asymptotically approaches that of a pure utility maximizer. To ensure that this utility-driven regime remains theoretically well-behaved, we naturally require the agent's underlying valuation distribution $F$ to be regular. Provided this standard regularity assumption holds, we demonstrate that the equivalence mapping holds robustly even for this generalized hybrid class.

\begin{lemma}
    \label{lem:hybrid-equivalence}
    Consider a hybrid maximizer with type $\theta \in (0, 1)$, a private ROS target $\gamma \ge 1$, and a regular valuation distribution $F$. There exists a corresponding standard utility maximizer drawing their valuation from a regular distribution $G$ such that, under any anonymous price $p > 0$, both agents exhibit identical purchasing probabilities and generate the exact same expected revenue.
\end{lemma}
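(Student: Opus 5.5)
We mirror the construction of Lemma~\ref{lem:val-equivalence}. Under price $p$ the hybrid agent buys iff $v \ge \tau(p) = \max(\theta p, \alpha(p))$, where $\alpha(p) = \inf\{\tau \in \mathrm{supp}(F) \mid A(\tau) \ge \gamma p\}$ and $A(\tau) = \Ex[v \mid v \ge \tau]$. We therefore define $G(p) = F(\tau(p))$. This forces identical purchase probabilities $1-G(p) = 1-F(\tau(p))$ and hence identical revenue $p(1-G(p))$ at every price. We first check that $\tau$ is non-decreasing in $p$: both $\theta p$ and $\alpha(p)$ are non-decreasing, the latter as shown in the proof of Lemma~\ref{lem:val-equivalence}. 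So $G$ is a valid CDF. What remains is regularity, which we prove via concavity of the revenue curve in quantile space.

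\textbf{Splitting the price range.} Write $G = \max(G_1, G_2)$, where $G_1(p) = F(\theta p)$ and $G_2(p) = F(\alpha(p))$; this holds because $F$ is monotone. Each piece is regular on its own. $G_1$ is the distribution of $v/\theta$, a rescaling of a regular $F$, with virtual value $v_F(\theta z)/\theta$, which is non-decreasing. $G_2$ is exactly the distribution built in Lemma~\ref{lem:val-equivalence}, with virtual value $v_{G_2}(z) = \alpha(z)/\gamma$.

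\textbf{Regularity of $G$.} A pointwise maximum of regular CDFs need not be regular in general, so we exploit the specific structure. Set $h(p) = A(\theta p) - \gamma p$; the regime $\tau = \theta p$ is exactly where $h(p) \ge 0$. The condition $A(\theta p) \ge \gamma p$ is equivalent to $\Ex[v \mid v \ge \theta p] \ge \gamma p$. Since $A(t) - t$ is the mean residual life, this should hold precisely for $p$ below some crossover $p^\ast$. To see this, note $A(\theta p) - \gamma p = (A(\theta p) - \theta p) - (\gamma - \theta) p$. For regular $F$ we want $A(t) - t$ to be non-increasing, or at least to cross the increasing line $(\gamma - \theta)t/\theta$ only once. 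If it does, $G$ equals $G_1$ on $[0, p^\ast]$ and $G_2$ on $[p^\ast, \infty)$.

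The virtual value is then piecewise non-decreasing, so it suffices to check monotonicity across $p^\ast$: we need $v_{G_1}(p^\ast) \le v_{G_2}(p^\ast)$. At $p^\ast$ we have $\tau = \theta p^\ast = \alpha(p^\ast)$, so $G$ is continuous there, and we compare hazard rates. Computing $(1-G_1)/g_1 = (1-F(\theta z))/(\theta f(\theta z))$ and $(1-G_2)/g_2 = z - \alpha(z)/\gamma$, the required inequality becomes a comparison between $\theta p^\ast - (1-F)/f$ evaluated at $\theta p^\ast$ (scaled by $1/\theta$) and $\theta p^\ast/\gamma$. This should follow from the binding identity $A(\theta p^\ast) = \gamma p^\ast$ combined with the crossing direction $h'(p^\ast) \le 0$.

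\textbf{Main obstacle.} The hardest step is establishing the single-crossing structure and the kink inequality at $p^\ast$, since monotone mean residual life is not implied by regularity alone. If single crossing fails, the fallback is to argue directly in quantile space. There the revenue curve of $G$ is $R_G(q) = q \cdot \min(R_1(q)/q, R_2(q)/q)$, i.e.\ $R_G(q) = \min(R_1(q), R_2(q))$ where $R_1, R_2$ are the concave revenue curves of $G_1$ and $G_2$. Because $G = \max(G_1, G_2)$ gives $1-G = \min(1-G_1, 1-G_2)$, the price at each quantile is the smaller of the two, and a minimum of concave functions is concave. We would verify this quantile identity carefully; if it holds, it yields regularity without any crossing analysis, so we would try this route first.
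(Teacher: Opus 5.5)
Your quantile-space route is exactly the paper's proof, and it closes without further work: $G^{-1}(y)=\inf\{p : G_1(p)\ge y \text{ or } G_2(p)\ge y\}=\min(G_1^{-1}(y),G_2^{-1}(y))$, so the revenue curve of $G$ is the pointwise minimum of two concave revenue curves and is therefore concave. Contrary to your aside, the same argument shows that a pointwise maximum of regular CDFs is always regular; this is also why the kink inequality in your first route holds automatically at every switching point (the branch that becomes the maximum has at least as large a density there, so $v_G$ can only jump up), which makes the single-crossing and mean-residual-life analysis unnecessary.
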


\begin{proof}
Let $v \sim F$ be the valuation of the hybrid maximizer with type $\theta \in (0,1)$ and ROS target $\gamma \geq 1$. As established in Section~3, the agent employs a threshold strategy $\tau_H(p) = \max(\theta p,\, \alpha(p))$ under anonymous price $p$, where $\alpha(p) = \inf\{x \in \mathrm{supp}(F) \mid A(x) \geq \gamma p\}$ with $A(x) = \mathbb{E}[v \mid v \geq x]$.

We construct the equivalent utility maximizer's distribution by setting $G(p) = F(\tau_H(p))$. Under this construction, a standard utility maximizer drawing valuation $z \sim G$ purchases at price $p$ if and only if $z \geq p$, yielding the purchasing probability $1 - G(p) = 1 - F(\tau_H(p))$, which exactly matches that of the original hybrid maximizer. It remains to show that $G$ is regular, i.e., its virtual valuation function $\varphi_G(p) = p - \frac{1 - G(p)}{g(p)}$ is monotonically non-decreasing. We establish this through the concavity of the revenue curve of $G$.

Since $F$ is non-decreasing and $\tau_H(p) = \max(\theta p,\, \alpha(p))$, we have the pointwise identity
\[
G(p) \;=\; F\bigl(\max(\theta p,\, \alpha(p))\bigr) \;=\; \max\bigl(\underbrace{F(\theta p)}_{G_2(p)},\;\underbrace{F(\alpha(p))}_{G_1(p)}\bigr),
\]
where the second equality follows from the monotonicity of $F$.

Define the \emph{quantile revenue curve} of a distribution $H$ as $\widetilde{R}_H(q) = q \cdot H^{-1}(1 - q)$ for $q \in (0,1)$, where $H^{-1}$ denotes the generalized inverse.  It is a standard result in mechanism design that a distribution $H$ is regular if and only if $\widetilde{R}_H$ is concave (as we mentioned in Section \ref{sec:pre}).

Since $G = \max(G_1, G_2)$ pointwise, the generalized inverse satisfies
\begin{equation}\label{eq:inverse_min}
G^{-1}(y) \;=\; \min\bigl(G_1^{-1}(y),\; G_2^{-1}(y)\bigr).
\end{equation}
To verify \eqref{eq:inverse_min}, observe that
$G^{-1}(y) = \inf\{p : G(p) \geq y\} = \inf\{p : G_1(p) \geq y \text{ or } G_2(p) \geq y\} = \min(G_1^{-1}(y),\, G_2^{-1}(y))$.
Multiplying both sides of \eqref{eq:inverse_min} by $q \geq 0$, the revenue curve of $G$ decomposes as
\begin{equation}\label{eq:rev_min}
\widetilde{R}(q) \;=\; q\cdot G^{-1}(1-q) \;=\; \min\bigl(\widetilde{R}_1(q),\; \widetilde{R}_2(q)\bigr),
\end{equation}
where $\widetilde{R}_i(q) = q \cdot G_i^{-1}(1-q)$ is the quantile revenue curve of $G_i$ for $i = 1, 2$.

We consider two components separately:

\textbf{Component $\widetilde{R}_1$ (ROS-driven regime).} The distribution $G_1(p) = F(\alpha(p))$ arises from the pure value-maximizer construction with ROS target $\gamma$, which is exactly the setting of Lemma~\ref{lem:val-equivalence}. By that lemma, $G_1$ is a regular distribution, and hence $\widetilde{R}_1$ is concave.

\textbf{Component $\widetilde{R}_2$ (price-sensitivity regime).} Since $G_2(p) = F(\theta p)$, its generalized inverse is $G_2^{-1}(y) = F^{-1}(y)/\theta$. Therefore,
    \[
    \widetilde{R}_2(q) \;=\; q \cdot \frac{F^{-1}(1-q)}{\theta} \;=\; \frac{1}{\theta}\,\widetilde{R}_F(q),
    \]
    where $\widetilde{R}_F(q) = q \cdot F^{-1}(1-q)$ is the quantile revenue curve of $F$. By the regularity of $F$, $\widetilde{R}_F$ is concave, and positive scaling preserves concavity, so $\widetilde{R}_2$ is concave.

By \eqref{eq:rev_min}, the revenue curve $\widetilde{R}$ is the pointwise minimum of two concave functions. Since the pointwise infimum of any family of concave functions is concave,\footnote{For completeness: let $\{f_\alpha\}$ be concave functions. For any $\lambda \in [0,1]$ and $x, y$ in the domain,
$\inf_\alpha f_\alpha(\lambda x + (1-\lambda)y) \geq \inf_\alpha [\lambda f_\alpha(x) + (1-\lambda)f_\alpha(y)] \geq \lambda \inf_\alpha f_\alpha(x) + (1-\lambda) \inf_\alpha f_\alpha(y)$.}
$\widetilde{R}$ is concave. Therefore, $G$ is regular.
\end{proof}

\section{Ex-Ante Relaxation of Revenue}

We do not yet know the exact structure of the revenue-optimal mechanism. However, to establish a rigorous theoretical benchmark for optimal revenue, we can employ the \emph{ex-ante} relaxation framework by \citet{AHNPY19}. This technique serves as a powerful analytical tool, providing a robust upper bound by completely decoupling the intertwined allocation dynamics among competing agents.

We can intuitively view this relaxation through the lens of independent decoupled allocations. Consider any mechanism, a feasible allocation will provide each agent $i\in[n]$ an ex-ante probability $q_i$ of getting the item, which equals the expectation of interim allocation $\Ex_{v_i\sim F_i}[x_i(s_i(v_i))]$.
Standard feasibility dictates that the sum of these individual allocation probabilities satisfies $\sum_{i=1}^n q_i \le 1$. The relaxation imagines a modified environment where the seller offers a separate copy of the item to each agent independently, requiring only that the \emph{expected} total number of copies sold across all participants does not exceed one. Under this complete decoupling, the expected revenue extracted from any individual agent $i$ is constrained solely by their own ex-ante allocation probability $q_i$. Consequently, the maximum expected revenue of any valid mechanism is upper-bounded by the following mathematical program over all feasible ex-ante allocation probability vectors $\mathbf{q}$:
\begin{align*}
    \text{Maximize} \quad & \sum_{i=1}^n R_i(q_i) \\
    \text{Subject to} \quad & \sum_{i=1}^n q_i \le 1, \\
    & q_i \ge 0, \quad \forall i \in [n].
\end{align*}

In this decoupled state, the maximum revenue $R_i(q_i)$ extractable from agent $i$ represents the supremum of expected payments over all individually rational mechanisms that guarantee an interim allocation probability of at most $q_i$. We can characterize this single-agent revenue as $R_i(q_i) = q_i \cdot p_i(q_i)$, where $p_i(q_i)$ is formally defined as the highest possible price the seller can set while ensuring a purchasing probability of at least $q_i$ under the specific type and constraints of the agent.

The core challenge lies in characterizing the price-probability mapping $p_i(\cdot)$. For a standard utility maximizer (UM), this is simply the inverse survival function $p_i(q_i) = F_i^{-1}(1 - q_i)$. Our previous technical analysis allows us to bridge this gap universally for any agent driven by Return on Spend (ROS) constraints. By synthesizing the structural equivalences established for both value and hybrid maximizers, we can formalize a price-probability bijection that is perfectly identical to the one for a regular utility maximizer.

\begin{cor}
    \label{cor:bijection}
    For any value maximizer or regular hybrid maximizer with an ROS target $\gamma_i \ge 1$, and for any purchasing probability $q_i \in (0, 1)$, there exists a uniquely determined anonymous price $p_i(q_i)$ that induces this exact probability. Furthermore, this required price is identical to the price induced by their corresponding equivalent regular utility maximizer described in Lemma \ref{lem:val-equivalence} or Lemma \ref{lem:hybrid-equivalence}.
\end{cor}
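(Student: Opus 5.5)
We will derive the corollary directly from the equivalence lemmas, using the identity between purchasing probabilities. Fix an agent $i$ that is a value maximizer or a regular hybrid maximizer with target $\gamma_i \ge 1$. Let $G_i$ be the regular distribution given by Lemma~\ref{lem:val-equivalence} (for $\theta_i=0$) or by Lemma~\ref{lem:hybrid-equivalence} (for $\theta_i\in(0,1)$). By construction, $G_i(p)=F_i(\tau_i(p))$ for every $p>0$. Hence the probability that agent $i$ buys at price $p$ is the survival function $S_i(p):=1-G_i(p)$. So the problem of finding a price that induces probability $q_i$ is exactly the problem of solving $1-G_i(p)=q_i$.

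\textbf{Step 1: existence.} The natural candidate is $p_i(q_i)=G_i^{-1}(1-q_i)$, where $G_i^{-1}$ is the generalized inverse. This is the same formula as the utility-maximizer price $F_i^{-1}(1-q_i)$, with $G_i$ in place of $F_i$. To see that this price hits $q_i$ exactly, we need $G_i$ to be continuous. For value maximizers, continuity follows because $F_i$ is continuous and $\tau_i(\cdot)$ is non-decreasing. The jump points of $\tau_i$ correspond to flat regions of $F_i$, on which $F_i$ is constant, so $F_i\circ\tau_i$ has no jumps. For hybrid maximizers, $G_i=\max(G_{i,1},G_{i,2})$ as in the proof of Lemma~\ref{lem:hybrid-equivalence}, and a maximum of two continuous functions is continuous. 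We then check the limits: $G_i(p)\to F_i(\tau_i(0^+))$ as $p\to 0$, which should equal $0$ or at least lie at or below the bottom of the support, and $G_i(p)\to 1$ as $p\to\infty$, since $\tau_i(p)\to\infty$ or leaves the support. The intermediate value theorem then gives existence for every $q_i\in(0,1)$.

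\textbf{Step 2: uniqueness.} Here we use regularity of $G_i$. Its quantile revenue curve $R(q)=q\,G_i^{-1}(1-q)$ is concave, and $G_i^{-1}$ is well defined on $(0,1)$. The price is defined as the generalized inverse, so it is single-valued by definition. Any residual non-uniqueness would come from a flat interval of $G_i$ on which $1-G_i\equiv q_i$. We handle this in the way the paper defines $p_i(q_i)$ in the ex-ante program: as the \emph{highest} price achieving purchase probability at least $q_i$, namely $\sup\{p: 1-G_i(p)\ge q_i\}$. This is a uniquely determined quantity. Alternatively, we can argue that strict monotonicity of $v_G(z)=\tau(z)/\gamma$ rules out flat pieces in the interior of the support.

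\textbf{Step 3: identification.} The equivalent utility maximizer with distribution $G_i$ buys at price $p$ with probability $1-G_i(p)$. Its price-probability map is therefore the same function, so the price it needs for probability $q_i$ is identical to the one found above. This shows that $R_i(q_i)=q_i\,G_i^{-1}(1-q_i)$, with $R_i$ concave. The main obstacle is Step 1's continuity and boundary check, especially the behavior of $\tau_i$ near $p\to0$ and across flat regions of $F_i$ and $\tau_i=\infty$. I would handle it by working with the infimum definition of $\tau_i$ and the continuity of $F_i$, rather than by differentiating.
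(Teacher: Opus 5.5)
Your existence argument is sound, and in one respect more careful than the paper's. The paper derives both continuity and strict monotonicity of $S_i=1-G_i$ from regularity (``$g_i>0$ on the interior of the support''). You instead obtain continuity from the construction itself: $F_i$ is continuous, and $\tau_i$ only jumps across flat pieces of $F_i$. That is the right source, since regularity in the revenue-curve sense does not by itself exclude an atom at the top of the support.

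The genuine gap is Step 2, and uniqueness is half of what the corollary asserts. Defining $p_i(q_i)=\sup\{p:1-G_i(p)\ge q_i\}$ is a tie-breaking convention, not a proof. If $G_i$ were constant at level $1-q_i$ on some $[a,b]$ with $a<b$, every price in $[a,b]$ would induce exactly $q_i$, and the claim would simply be false. The convention is also inconsistent with your other steps. In that situation your Step~1 candidate $G_i^{-1}(1-q_i)=\inf\{p:G_i(p)\ge 1-q_i\}$ equals $a$, while the supremum equals $b$, so the identity $R_i(q_i)=q_i\,G_i^{-1}(1-q_i)$ in Step~3 would fail. Your fallback via strict monotonicity of $v_G=\tau/\gamma$ is circular. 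That formula was obtained in Lemma~\ref{lem:val-equivalence} by dividing by $f(\tau(z))$ and $A'(\tau(z))$, so it holds only where $g>0$; on a putative flat piece $g=0$ and the formula says nothing.

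What is missing is a proof that $G_i$ has no flat piece at a level in $(0,1)$. You already wrote down the tool and did not use it. Suppose $G_i\equiv y_0=1-q_0\in(0,1)$ on $[a,b]$ with $a<b$. Then $G_i^{-1}(y)\ge b$ for $y>y_0$, while $G_i^{-1}(y_0)\le a$. Hence $R(q)=q\,G_i^{-1}(1-q)$ satisfies $\liminf_{q\uparrow q_0}R(q)\ge q_0 b>q_0 a\ge R(q_0)$. A concave function on the open interval $(0,1)$ cannot have this discontinuity; this is the rigorous content of the paper's ``strictly positive density''.

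Alternatively, argue from the construction. In the active range, continuity of $A$ and the infimum definition give $A(\tau(p))=\gamma p$, so $p_1<p_2$ implies $A(\tau(p_1))<A(\tau(p_2))$. But $F(\tau(p_1))=F(\tau(p_2))$ would leave no mass between the two thresholds and force $A(\tau(p_1))=A(\tau(p_2))$. For hybrids, $G_{i,2}(p)=F_i(\theta_i p)$ has no interior flat piece, because regular $F_i$ has none (same concavity argument). A pointwise maximum of non-decreasing functions, each strictly increasing wherever it takes values in $(0,1)$, has the same property.

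With either fix, $S_i$ is a continuous, strictly decreasing bijection from the active price range onto $(0,1)$, and your Steps 1 and 3 go through.
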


\begin{proof}
By Lemma \ref{lem:val-equivalence} and \ref{lem:hybrid-equivalence}, the purchasing probability for any value or hybrid maximizer at an arbitrary price $p$ corresponds exactly to the survival function of a constructed regular distribution $S_i(p) = 1 - G_i(p)$. Since $G_i$ is guaranteed to be regular, its probability density function $g_i(p)$ remains strictly positive over its interior support. Consequently, the survival function $S_i(p)$ is strictly monotonically decreasing and continuous across the active price range. Hence $S_i$ is invertible on $(0, 1)$, defining the price-probability map $p_i(q_i) = S_i^{-1}(q_i)$. Since both the original complex agent and the constructed value maximizer share this exact same survival function, the price-probability mapping $p_i(q_i)$ is identical for both.
\end{proof}

This structural equivalence delivers a reduction allowing us to bypass the integral ROS constraints and directly treat the heterogeneous population as a collection of regular utility maximizers. Consequently, our heterogeneous environment can be analyzed through the same ex-ante relaxation used for standard utility-maximizer settings.

\section{Reduction to Pure Utility Maximizer Environment}

In this section, we demonstrate how to formally obtain the lower and upper bounds for the revenue guarantee of anonymous pricing. By utilizing the structural equivalence established in the previous sections, we reduce the heterogeneous environment to one populated solely by regular utility maximizers. This reduction directly yields the following main theorem.

\begin{theorem}
    \label{thm:main}
    For any instance $I$ possibly consisting of value maximizers, regular utility maximizers, and regular hybrid maximizers. Let $p^\star(I)$ be the revenue-optimal anonymous price for instance $I$. We have
    \begin{align*}
        C^\star\leq\sup_{I}\frac{\mathrm{OPT}(I)}{\mathrm{REV}_{\mathrm{AP}(p^\star(I))}}\leq e,
    \end{align*}

    where $C^\star\approx2.62$, and $\mathrm{OPT}(I)$ represents the highest revenue achieved by any mechanism for instance $I$. The first inequality is witnessed by an environment populated entirely by value maximizers.
\end{theorem}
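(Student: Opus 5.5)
The proof splits into the upper bound $e$ on the ratio and the lower bound $C^\star$, both obtained by reducing to the pure utility-maximizer setting via Corollary~\ref{cor:bijection}.

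\textbf{Upper bound $e$.} First, bound $\mathrm{OPT}(I)$ by the ex-ante relaxation $\max\{\sum_i R_i(q_i) : \sum_i q_i\le 1\}$. Here $R_i(q_i)$ is the largest revenue extractable from agent $i$ alone with ex-ante allocation probability at most $q_i$.

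The delicate step is to argue that, for a value or hybrid maximizer, $R_i(q)$ is at most the revenue-curve value $q\cdot G_i^{-1}(1-q)$ of the equivalent regular distribution $G_i$ from Lemma~\ref{lem:val-equivalence} or Lemma~\ref{lem:hybrid-equivalence}. That is, among single-agent individually rational mechanisms with allocation probability $q$, posting the price $p_i(q)$ is optimal.

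\begin{itemize}
\item For a value maximizer, I would argue directly. The ROS constraint gives total payment $\le \Ex[v x]/\gamma$, and for fixed ex-ante allocation $q$ the quantity $\Ex[vx]$ is maximized by allocating to the top-$q$ quantile. So the payment is at most $q\cdot A(F^{-1}(1-q))/\gamma = q\,p_i(q)$, which is exactly the posted-price revenue.
\item For a hybrid maximizer, the ROS bound gives the same estimate, bounding revenue by the $G_1$ component. The participation constraint $\Ex[vx]\ge\theta\Ex[p]$ gives $\Ex[p]\le q F^{-1}(1-q)/\theta$ by the same top-quantile argument, bounding revenue by the $G_2$ component.
\end{itemize}

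Taking the minimum of the two hybrid bounds matches $\widetilde R = \min(\widetilde R_1,\widetilde R_2)$ from the proof of Lemma~\ref{lem:hybrid-equivalence}.

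Having these bounds, the relaxation is dominated by the ex-ante program for the pure regular utility-maximizer instance $I' = (G_1,\dots,G_n)$. By the equivalence lemmas, AP at any price $p$ produces identical selling probability $1-\prod_i G_i(p)$ in $I$ and $I'$, so the AP revenues coincide. I then invoke \citet{AHNPY19}: for regular utility maximizers, the optimal anonymous price achieves at least $1/e$ of the ex-ante relaxation. This yields $\mathrm{OPT}(I)\le e\cdot\mathrm{REV}_{\mathrm{AP}(p^\star(I))}$.

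\textbf{Lower bound $C^\star$.} Take the worst-case regular utility-maximizer instances of \citet{JLTX19}, where the ratio of the ex-ante/Myerson optimum to optimal AP approaches $C^\star$. I would realize each regular distribution $G_i$ as the induced distribution of a value maximizer, inverting the map of Lemma~\ref{lem:val-equivalence}. With $\gamma=1$, one needs $F$ with $F(\tau(p))=G(p)$ where $\Ex_F[v\mid v\ge\tau(p)]=p$. Writing $\tau(p)=v_G(p)$, the virtual value of $G$, one sets $F(v_G(p)) = G(p)$. This is well defined when $v_G$ is increasing, which regularity provides (strict regularity after perturbation). The result is that $F$ is the distribution of $\varphi_G(z)$ with $z\sim G$ (restricted to nonnegative values, with zero-truncation handled by a limit argument).

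AP revenue is then the same in both instances. For the optimum, value maximizers with public $\gamma=1$ allow full surplus extraction \citep{BDMMZ21}, so the optimum for the value maximizers is at least the optimal welfare $\Ex[\max_i v_i]$ under $F$. This equals $\Ex[\max_i \varphi_{G_i}(z_i)^+]$, which is Myerson's optimal revenue for $I'$. Hence the ratio is at least that of \citet{JLTX19}, tending to $C^\star$.

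\textbf{Main obstacles.} The most delicate step is the single-agent optimality of posted pricing for hybrid agents inside the relaxation, since a general mechanism could mix the two constraints. The inverse construction also needs care at regions where $v_G<0$ and where the JLTX19 instances use limiting (triangular) distributions.
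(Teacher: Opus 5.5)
Your architecture matches the paper's: the ex-ante relaxation, the equivalence lemmas to replace every agent by a regular utility maximizer with distribution $G_i$, then \citet{AHNPY19} for the factor $e$; and the inverse map of Corollary~\ref{cor:u2v} plus \citet{JLTX19} for $C^\star$. The paper takes the single-agent step for granted when it writes $R_i(q_i)=q_i\,p_i(q_i)$. Your value-maximizer argument proves it correctly: ROS gives $\Ex[p]\le\Ex[vx]/\gamma$, and the top-$q$ allocation maximizes $\Ex[vx]$.

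\textbf{The hybrid case fails as written.} The inequality $\Ex[vx]\ge\theta\,\Ex[p]$ plus the top-quantile bound only yields $\Ex[p]\le\frac{1}{\theta}\int_0^q F^{-1}(1-s)\,ds$, not $\frac{1}{\theta}\,qF^{-1}(1-q)$. Since $\theta<1\le\gamma$, this is weaker than the ROS bound, so you only obtain $R_i\le\widetilde R_1$. That does not match the AP selling probability $1-G$ with $G=\max(G_1,G_2)$, so \citet{AHNPY19} cannot be applied.

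The missing ingredient is incentive compatibility. Lagrangifying the ROS constraint, as the paper does in its FPA example, shows the hybrid best-responds like a quasi-linear bidder with value $\mu v$, where $\mu=(1+\lambda)/(\theta+\lambda\gamma)\le 1/\theta$. Myerson's payment identity (lowest type's utility nonnegative) then gives $\Ex[p]\le\mu\,\Ex[\varphi_F(v)x(v)]$. For regular $F$ and $\Ex[x]=q'$, the top-$q'$ allocation maximizes $\Ex[\varphi_F(v)x(v)]$ and $\Ex[vx]$ simultaneously. Hence $\Ex[p]\le\min\bigl(\widetilde R_2(q'),\widetilde R_1(q')\bigr)=\widetilde R(q')$ with $q'\le q$, which is exactly what the relaxation needs.

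\textbf{Lower bound.} Here you take a different route. The paper only uses that the inverse map preserves the purchase probability at every posted price. Hence every (sequential) posted-price mechanism, in particular the benchmark of \citet{JLTX19}, earns the same in both instances. You instead use full surplus extraction to get $\mathrm{OPT}\ge\Ex[\max_i v_i]=\Ex[\max_i\varphi_{G_i}(z_i)^+]$, the Myerson benchmark. That is stronger, but it leans on \citet{BDMMZ21} and on $\gamma$ being known to the designer; the paper's comparison is self-contained.

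The zero-truncation you defer to a limit argument is not a technicality, and it affects the paper's Corollary~\ref{cor:u2v} equally. A triangular distribution has a negative constant virtual value on its continuous part. But any value maximizer with nonnegative values induces the nondecreasing revenue curve $\frac{1}{\gamma}\int_0^q F^{-1}(1-s)\,ds$. Truncating at zero therefore replaces $G$ by the distribution with revenue curve $\max_{q'\le q}\widetilde R_G(q')$, which sells with strictly higher probability at every price below the triangle's top value. So AP revenue is not preserved, and the ratio of \citet{JLTX19} does not transfer to pure value maximizers without a new argument. The inequality $C^\star\le\sup_I$ itself still holds via pure utility-maximizer instances.
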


\subsection{The Upper Bound}

Building upon the ex-ante relaxation, we can formally evaluate the revenue guarantee of anonymous pricing against the revenue-optimal mechanism. We frame this worst-case revenue gap as a constrained optimization problem. Let $S_i(p)$ denote the probability that agent $i$ is willing to buy the item at price $p$. The expected revenue extracted by anonymous pricing at any uniform price $p$ is directly formulated as $p \cdot \left(1 - \prod_{i=1}^n (1 - S_i(p))\right)$. By explicitly normalizing this optimal anonymous pricing revenue to be at most $1$, the maximum objective value of our ex-ante relaxation yields the exact upper bound on the revenue gap. We formulate this bounding program as follows:
\begin{align*}
    \text{Maximize} \quad & \sum_{i=1}^n R_i(q_i) \\
    \text{Subject to} \quad & p \cdot \left( 1 - \prod_{i=1}^n \big(1 - S_i(p)\big) \right) \le 1, \quad \forall p > 0, \\
    & \sum_{i=1}^n q_i \le 1, \\
    & q_i \ge 0, \quad \forall i \in [n].
\end{align*}

Let $G_i$ denote the regular valuation distribution constructed in Lemma \ref{lem:val-equivalence} or Lemma \ref{lem:hybrid-equivalence} when agent $i$ is a value maximizer or hybrid maximizer, respectively. For any standard utility maximizer, we simply set $G_i = F_i$. By Corollary \ref{cor:bijection}, we can seamlessly rewrite the preceding programming as follows:
\begin{align*}
    \text{Maximize} \quad & \sum_{i=1}^n q_i \cdot G_i^{-1}(1-q_i) \\
    \text{Subject to} \quad & p \cdot \left( 1 - \prod_{i=1}^n G_i(p) \right) \le 1, \quad \forall p > 0, \\
    & \sum_{i=1}^n q_i \le 1, \\
    & q_i \ge 0, \quad \forall i \in [n],
\end{align*}
where $G_i$ is a regular distribution for all $i$. The optimal objective value of this continuous program is shown to be $e$ by \citet{AHNPY19}. This result  implies the second inequality in our Theorem \ref{thm:main}.

\subsection{The Lower Bound}

\citet{JLTX19} shows that there is an instance with standard utility maximizers such that the optimal anonymous pricing only extracts about $1/C^{\star}\approx 1/2.62$ of the revenue from the optimal posted pricing, therefore, at most $1/C^{\star}$ of the optimal revenue. We now demonstrate that this lower bound for standard utility maximizers translates directly to our pure value maximizer environment. The core of their construction relies on triangular distributions. Since a triangular revenue curve is concave, these distributions are regular. To formally transplant this worst-case result into our environment, we simply need to show that our structural equivalence mapping is surjective. Specifically, any regular utility maximizer can be perfectly simulated by a pure value maximizer.

\begin{cor}
\label{cor:u2v}
    For any regular utility maximizer drawing their valuation from a distribution $G$, there exists a corresponding value maximizer with a target return on spend $\gamma = 1$ and a valid valuation distribution $F$ such that both agents exhibit identical purchasing probabilities under any anonymous price.
\end{cor}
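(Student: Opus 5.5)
The plan is to run the construction of Lemma~\ref{lem:val-equivalence} backwards. That proof showed that when $G(p)=F(\tau(p))$, the virtual valuation of $G$ is $v_G(z)=\tau(z)/\gamma$. With $\gamma=1$ this reads $\tau(z)=v_G(z)$. So, given a regular $G$, I would take $F$ to be the law of the virtual value $v_G(Z)$ with $Z\sim G$, and let the value maximizer draw $v=v_G(Z)$. Regularity makes $v_G$ non-decreasing, so $\Pr[v\ge v_G(p)]=\Pr[Z\ge p]=1-G(p)$, up to flat pieces of $v_G$, which I would handle with the same generalized-inverse convention the paper uses for $\tau$.

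The key step is to check that the value maximizer's ROS threshold at price $p$ is exactly $\tau(p)=v_G(p)$. This rests on the standard Myerson identity
\[
\Ex\bigl[v_G(Z)\,\mathbf{1}\{Z\ge p\}\bigr]=\int_p^\infty \bigl(z g(z)-(1-G(z))\bigr)\,dz = p\,(1-G(p)),
\]
which follows by integrating by parts. It gives $\Ex[v\mid v\ge v_G(p)] = p = \gamma p$, so the ROS constraint binds exactly at threshold $v_G(p)$.

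Next I would show that no lower threshold is feasible. The conditional mean $A(t)=\Ex[v\mid v\ge t]$ is non-decreasing by the computation of $A'$ in Lemma~\ref{lem:val-equivalence}. It is strictly increasing wherever $F$ has mass, and the infimum definition of $\tau$ deals with flat regions. Hence the minimal $t$ with $A(t)\ge p$ is $v_G(p)$, and the purchase probability is $1-F(v_G(p))=1-G(p)$ for every $p$, which is the claim.

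The main obstacle is that $v_G$ can be negative (for $z$ below the monopoly price $r$), while the paper requires a non-negative support. For prices $p\ge r$ nothing changes: thresholds are non-negative and the identity above concerns only the upper tail. For $p<r$, clamping values at $0$ would break the identity, since the agent would then buy with probability $1$. My first attempt is to use the fact that the instances actually needed here, the triangular distributions of \citet{JLTX19}, have virtual values that are non-negative on the support, or become so after a negligible perturbation. I would verify this directly: for a triangular revenue curve the quantile revenue curve is linear with non-negative slope on the relevant range, and the slope equals the virtual value. Failing that, I would argue that prices below every agent's monopoly price are never revenue-relevant for the ratio in Theorem~\ref{thm:main}, and restrict the equivalence to that price range.
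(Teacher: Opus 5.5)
Your construction and identity are exactly the paper's. It sets $1-F(x)=1-G(v_G^{-1}(x))$, so $F$ is the law of $v_G(Z)$, and verifies $A(v_G(p))=p$ through the antiderivative $-\frac{d}{ds}\bigl[s(1-G(s))\bigr]$; your minimality step is a harmless addition. The paper never checks the sign of $v_G$, so you have found the real hole, and the paper's proof shares it. But neither of your repairs works, and the statement cannot be proved as posed.

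Your first fallback rests on a wrong computation. For $\mathrm{Tri}(v_p,q_p)$ one has $1-G(z)=\frac{v_pq_p}{z(1-q_p)+v_pq_p}$ on $[0,v_p)$, so $v_G(z)=-\frac{v_pq_p}{1-q_p}<0$ on the entire continuous part, which carries mass $1-q_p$. The increasing edge of the triangle is the atom at $v_p$ alone. Your construction therefore sends this distribution to a two-point law with mean $0$ and a negative support point. Clamping that point to $0$ changes the purchase probabilities, so at best you get approximate behaviour, not identical behaviour.

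In fact exact matching is impossible. Take a value maximizer with non-negative values and any $\gamma\ge1$. At prices where the ROS constraint binds, $p\,S(p)=\frac{1}{\gamma}\Ex\bigl[v\,\mathbf{1}\{v\ge\tau(p)\}\bigr]$, which is non-increasing in $p$ because $\tau$ is non-decreasing and $v\ge0$. At all other prices $S(p)=1$. Hence the utility-maximizer twin of any value maximizer has its monopoly price at the bottom of its support, i.e.\ $v_G\ge0$, consistent with $v_G=\tau/\gamma$ in Lemma~\ref{lem:val-equivalence}. Now take a regular $G$ with $v_G(z)<0$ at some $z$ with $0<G(z)<1$, such as the uniform distribution on $[0,1]$ or any triangle distribution. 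There $\frac{d}{dz}\bigl[z(1-G(z))\bigr]=-g(z)\,v_G(z)>0$, so no value maximizer reproduces it. The corollary is therefore false in the stated generality.

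Your argument proves it exactly for regular $G$ with $v_G\ge0$ on its support and $s(1-G(s))\to0$ as $s\to\infty$. Your integration by parts uses that boundary condition silently. It fails, for example, for the equal-revenue distribution, which the construction maps to a point mass at $0$.

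Your second fallback fails for the same reason. The obstruction for agent $j$ sits at all prices below its own monopoly price $r_j$, not just below $\min_j r_j$. By the monotonicity above, a value maximizer that matches $G_j$ on $[r_j,\infty)$ must buy with probability at least $\min\{1,\,r_j(1-G_j(r_j))/p\}$ at every $p<r_j$. For a triangle agent, whose monopoly price is the atom at the top of its support, this strictly exceeds $1-G_j(p)$ everywhere below the atom.

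Those prices are not revenue-irrelevant. The value-maximizer instance has strictly larger anonymous-pricing revenue on them. That can raise its optimal anonymous-pricing revenue, which is the denominator of the ratio, and then the transfer breaks.

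What Theorem~\ref{thm:main} would actually need is an approximation argument. One would replace the worst-case family by regular distributions with non-negative virtual values, and show that this moves the optimal anonymous-pricing revenue only negligibly without lowering the benchmark. Neither your proposal nor the paper supplies such a step.
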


\begin{proof}
    Let $v_G(p) = p - (1-G(p))/g(p)$ be the virtual valuation function of the regular distribution $G$. Because $G$ is regular, $v_G(p)$ is non-decreasing. We define its generalized inverse as $v_G^{-1}(y) = \inf \{ p \mid v_G(p) \ge y \}$. We explicitly construct the target distribution $F$ by matching the survival probabilities: $1 - F(x) = 1 - G(v_G^{-1}(x))$.

    Since $v_G(p)$ is non-decreasing, its generalized inverse is also non-decreasing, guaranteeing that $F(x)$ is a valid cumulative distribution function. To verify this construction, we evaluate the conditional expectation $A(x)=\Ex[v~|~v\geq x]$ at  $x = v_G(p)$. By our construction, $1 - F(v_G(p)) = 1 - G(p)$ and the differentials satisfy $f(v_G(p)) v_G'(p) = g(p)$.

    The expected value above the threshold is $A(v_G(p))=\int_{v_G(p)}^{\infty} t f(t) dt / (1-F(v_G(p))$. Applying the change of variables $t = v_G(s)$, this numerator becomes $\int_p^\infty v_G(s) g(s) ds$. We substitute the definition of virtual valuation directly into the integral:
    \begin{align*}
        \int_p^\infty \left( s - \frac{1-G(s)}{g(s)} \right) g(s) ds = \int_p^\infty \big( s g(s) - (1-G(s)) \big) ds.
    \end{align*}

    Notice that the integrand is exactly the negative derivative of $s(1-G(s))$. Evaluating this boundary yields exactly $p(1-G(p))$. Dividing this result by the survival probability $1-G(p)$, we obtain $A(v_G(p)) = p$. This perfectly satisfies the binding threshold condition $A(\tau) = p$ for a value maximizer operating with $\gamma = 1$.
\end{proof}

By this mapping, the seller faces the exact same worst-case as in the standard setting. And the $1/C^{\star}$ upper bound in pure value maximizer environments leads to the first inequality in Theorem \ref{thm:main}.

\section{Prior-Free Mechanism}

Having shown the existence of anonymous pricing with better approximation ratios, we now turn to the prior-free setting to implement the anonymous pricing. In this practical environment, the seller interacts with the agents repeatedly over multiple rounds but has absolutely no prior knowledge of their valuation distributions. Consequently, the seller must dynamically explore and exploit prices over time.

The work by \cite{DMZ22} investigated this problem, proposing no-regret learning algorithms to dynamically optimize posted prices against value maximizers. By integrating their online learning dynamics with our static equivalence framework, we can immediately improve the theoretical revenue guarantees for Anonymous Pricing. Because the fundamental point-wise interaction between the seller and the value maximizers is strictly isomorphic to that of regular utility maximizers or hybrid maximizers, applying the exact same no-regret learning algorithm directly elevates the asymptotic revenue bound to match our tighter static approximation. We remark that our assumption is that the buyers are less patient than the seller, and we estimate our mechanisms in the long run. In this case, even patient buyers have no incentive to lie.

\begin{theorem}
    \label{thm:priorfree}
    Consider a repeated setting with a mixture of value maximizers, regular utility maximizers, and regular hybrid maximizers, where the seller has no prior knowledge of the valuation distributions. By employing the no-regret learning algorithm for Anonymous Pricing, the seller achieves an expected cumulative revenue of at least
    \begin{align*}
        \frac{1}{e}\cdot \text{OPT}\cdot T-O(T^{2/3}),
    \end{align*}
    recall that $\text{OPT}$ denotes the optimal revenue by the best Bayesian Nash equilibrium among the equilibria of any possible mechanism.
\end{theorem}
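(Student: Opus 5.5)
The plan is to split the cumulative revenue gap into two pieces. The first is the regret of the online learner against the best fixed anonymous price in hindsight. The second is the static gap between the best anonymous price and $\mathrm{OPT}$, which Theorem~\ref{thm:main} bounds by $e$. Concretely, write
\begin{align*}
\frac{1}{e}\,\mathrm{OPT}\cdot T - \Ex\Big[\sum_{t=1}^T \mathrm{Rev}_t\Big] = \Big(\frac{1}{e}\mathrm{OPT} - \mathrm{REV}_{\mathrm{AP}(p^\star)}\Big)T + \Big(T\cdot \mathrm{REV}_{\mathrm{AP}(p^\star)} - \Ex\Big[\sum_{t=1}^T \mathrm{Rev}_t\Big]\Big).
\end{align*}
The first term is nonpositive by Theorem~\ref{thm:main}. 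It therefore suffices to show that the second term, the regret against the fixed price $p^\star$, is $O(T^{2/3})$.

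For the regret term I would reuse the algorithm and analysis of \citet{DMZ22}, a discretized-price bandit scheme over a price grid of size about $T^{1/3}$. The key check is that its hypotheses hold in the mixed population. In each round, every agent, whatever its type, responds to a posted price $p$ by a threshold rule. By Lemma~\ref{lem:val-equivalence}, Lemma~\ref{lem:hybrid-equivalence} and Corollary~\ref{cor:bijection}, the induced purchase probability is $1-G_i(p)$ for a regular $G_i$. So the round-$t$ expected revenue at price $p$ is $r(p)=p\,(1-\prod_i G_i(p))$, exactly as for regular utility maximizers. Observed revenue is a bounded i.i.d.\ sample of this quantity across rounds, so the bandit feedback model matches the one in \citet{DMZ22}.

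The discretization error also needs control. I would use the one-sided Lipschitz property of posted-price revenue: decreasing the price from $p$ to $p-\epsilon$ only raises each agent's purchase probability, since every $G_i$ is a CDF. Hence $r(p-\epsilon)\ge r(p)-\epsilon$ when values are bounded. Rounding $p^\star$ down to the grid therefore loses at most the grid spacing per round, giving $O(T\cdot T^{-1/3})$. The standard UCB or Exp3 regret over $K=T^{1/3}$ arms contributes $O(\sqrt{KT}\log T)=O(T^{2/3})$ up to logs, or exactly $O(T^{2/3})$ with the tuned version in \citet{DMZ22}.

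I expect the main obstacle to be incentive/strategic issues across rounds. A value or hybrid maximizer might deviate from its myopic threshold in order to influence future prices, and then the per-round response would no longer be $1-G_i(p)$. I would handle this as the paper's remark suggests. Under the assumption that buyers are less patient than the seller (discounting future rounds), the learner's price changes are slow enough, with epochs of length growing in $T$, that the long-run gain from manipulation is dominated. So myopic threshold behavior is an approximate best response, and any residual loss is absorbed into the $O(T^{2/3})$ term. I would cite \citet{DMZ22} for this argument and only verify that it does not depend on $\theta_i$ or $\gamma_i$ beyond the threshold structure.
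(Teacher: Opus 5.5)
Your proposal is correct and follows essentially the same route as the paper: Theorem~\ref{thm:main} gives a fixed price $p^\star$ with per-round revenue at least $\frac{1}{e}\mathrm{OPT}$, and rounding it down to a grid of $K=T^{1/3}$ prices costs at most $1/K$ per round. Your one-sided Lipschitz argument is exactly the justification the paper leaves implicit when it chooses $p'\in[p^\star-1/K,p^\star]$, and a finite-armed stochastic bandit algorithm then adds $O(\sqrt{KT})$ regret. The only differences are that the paper invokes a minimax-optimal algorithm \citep{MG17}, which avoids the $\sqrt{\log T}$ factor that plain UCB or Exp3 would leave, and that it simply assumes myopic threshold behavior rather than attempting the manipulation argument in your last paragraph, which as sketched is unsubstantiated and not needed.
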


\begin{proof}
Due to Theorem \ref{thm:main}, there exists a single price $p^\star$ such that the expected revenue in each round $t \in [T]$ is at least $\frac{1}{e}\cdot\text{OPT}$. Our goal is to show that there is a prior-free mechanism that performs close to the fixed price $p^\star$. Assume without loss of generality that the maximum possible price is $1$. Consider the discretization that puts the possible prices into $K$ possible prices equally, i.e., consider the price space $P=\{1/K,2/K,\dots,1\}$. Therefore, there exists a price $p'\in[p^\star-1/K,p^\star]$ such that the expected revenue in each round is at least
$\frac{1}{e}\cdot\text{OPT}-1/K$.

If we run the optimal no-regret learning algorithm for the finite-armed stochastic bandit feedback environment (for example, \citep{MG17}), the achieved expected revenue is at least
\begin{align*}
    \frac{1}{e}\cdot\text{OPT}\cdot T-T/K-O(\sqrt{KT}).
\end{align*}

Let $K=T^{1/3}$, we have the expected revenue of at least

\begin{align*}
    \frac{1}{e}\cdot\text{OPT}\cdot T-O(T^{2/3}).
\end{align*}
\end{proof}

\section{Revenue Reduction from Introducing Competition}

In standard mechanism design, competition will not harm the revenue. For standard utility maximizers, an auction with a reserve price $r$ trivially dominates a posted pricing mechanism with price $p = r$ in terms of expected revenue. This is because the auction preserves the exact same allocation space (any agent with $v \ge r$ still participates) while extracting a higher payment from the winner. However, when agents are value maximizers with ROS constraints, this dominance no longer holds. Introducing competition can actually decrease the overall selling probability and the overall revenue. To illustrate this phenomenon, we provide the following example.

\begin{example}
    \label{exp:fpa}
    Consider two value maximizers with the same ROS target $\gamma=1$ drawing valuation from the same uniform distribution $U[5/8,1]$. The optimal anonymous pricing with $p^\star=5/6$ provides revenue of $200/243\approx 0.8230$. For the first-price auction with the same price as reserve, there is a Bayesian Nash equilibrium that yields revenue approximately $0.8195$.
\end{example}

 The key intuition is that a value maximizer's ROS budget is a scarce resource, and the two mechanisms allocate it differently. Under anonymous pricing, there is no competitor, so the entire budget is spent on expanding participation, i.e., accepting the price even when $v<p$, which directly increases the seller's selling probability. In the FPA, however, the budget must also finance competitive overbidding against the opponent, i.e., each bidder inflates their bid to win more often, and the opponent does the same. Concretely, the FPA participation threshold is substantially higher than the posted-price acceptance threshold. When the support of the value distribution is narrow, competition between near-identical bidders consumes most of the ROS slack while generating little additional revenue, tipping the balance in favor of the simpler mechanism.

\paragraph{Detailed calculations for Example~\ref{exp:fpa}.}

In this example, the revenue by anonymous price using price $p$ is
\begin{align*}
    REV_{\text{AP}}=p\cdot \left(1-\left(\frac{8(2p-1)-5}{3}\right)^2\right)
\end{align*}

For $p\in[13/16,1]$, the revenue is maximized at $p^\star=5/6$ with corresponding revenue of $200/243$.

Now, consider a First-Price Auction (FPA) with a reserve price $r = p^\star = 5/6$. In an auction, high-value agents must submit competitively higher bids to win. This competitive markup consumes a significant portion of their ROS budget.

To formalize this bidding behavior, we first formulate the value maximizer's objective. The agent seeks to maximize expected value $\mathbb{E}[v \cdot x]$ subject to the global ROS constraint $\mathbb{E}[v \cdot x] \ge \mathbb{E}[p]$. Using the method of Lagrange multipliers, the Lagrangian is $\mathcal{L} = \mathbb{E}[v \cdot x] + \lambda (\mathbb{E}[v \cdot x] - \mathbb{E}[p])$, where $\lambda > 0$ is the shadow price of the binding constraint. Factoring out $\lambda$, we can rewrite the objective as maximizing $\mathbb{E}\big[\frac{1+\lambda}{\lambda} v \cdot x - p\big]$. By defining the endogenous pacing multiplier $\mu = \frac{1+\lambda}{\lambda} > 1$, the optimization problem becomes identical to maximizing $\mathbb{E}[\mu v \cdot x - p]$. This implies that under the global ROS constraint, the value maximizer behaves exactly equivalently to a standard utility maximizer with a scaled true valuation of $\mu v$.

Let $b(v)$ be the symmetric, strictly increasing equilibrium bidding strategy with a participation threshold $\tau$. For an agent with true valuation $v \ge \tau$ in a two-bidder auction with values drawn from $U[a, 1]$ (where $a = 5/8$), the winning probability is $x(v) = F(v) = \frac{v - a}{1 - a}$. If the agent misreports their valuation as $z$, the expected virtual utility is

$$U(z \mid v) = \big(\mu v - b(z)\big) \cdot \frac{z - a}{1 - a}.$$

To find the equilibrium, we apply the first-order condition. Taking the derivative with respect to $z$ and evaluating it at the truth-telling optimum $z = v$ yields:

$$\left. \frac{\partial U}{\partial z} \right|_{z=v} = -b'(v) \cdot \frac{v - a}{1 - a} + \big(\mu v - b(v)\big) \cdot \frac{1}{1 - a} = 0.$$

Multiplying through by $(1-a)$ and rearranging gives

$$\mu v = b'(v)(v - a) + b(v) = \frac{d}{dv}\big[b(v)(v - a)\big].$$

Define the expected payment $P(v) = b(v)(v - a)$. The first-order condition reduces to the differential equation $P'(v) = \mu v$. Integrating from the threshold $\tau$ to $v$ yields

$$P(v) = P(\tau) + \int_{\tau}^{v} \mu \, t \, dt = P(\tau) + \frac{\mu}{2}(v^2 - \tau^2).$$

At the marginal threshold $\tau$, the agent bids the reserve price $r$, so $P(\tau) = r(\tau - a)$. Furthermore, the marginal type $\tau$ must be indifferent between participating and not participating. Therefore, their expected virtual utility must be zero:

$$\big(\mu \tau - b(\tau)\big) \cdot \frac{\tau - a}{1 - a} = 0.$$

Since $\tau > a$, this requires $\mu \tau = r$, which cleanly resolves the multiplier as $\mu = r/\tau$. Substituting this back into the payment function gives

$$P(v) = r(\tau - a) + \frac{r}{2\tau}(v^2 - \tau^2).$$

The corresponding bidding function is

$$b(v) = \frac{P(v)}{v - a} = \frac{r(\tau - a)}{v - a} + \frac{r(v^2 - \tau^2)}{2\tau(v - a)}.$$

To ensure the global ROS constraint is satisfied, the value maximizer must raise its participation threshold $\tau$ until the total expected value exactly equals the total expected payment. Since the common factor $1/(1-a)^2$ appears on both sides and cancels, the binding constraint reduces to

$$\int_{\tau}^{1} v(v - a) \, dv = \int_{\tau}^{1} P(v) \, dv.$$

Evaluating the left-hand side:

$$\text{LHS} = \int_{\tau}^{1} \left(v^2 - \frac{5}{8}v\right) dv = \frac{1 - \tau^3}{3} - \frac{5(1 - \tau^2)}{16}.$$

For the right-hand side, substituting $P(v) = r(\tau - a) + \frac{r}{2\tau}(v^2 - \tau^2)$ and integrating:

$$\text{RHS} = r(\tau - a)(1 - \tau) + \frac{r}{2\tau}\left[\frac{1 - \tau^3}{3} - \tau^2(1 - \tau)\right].$$

This simplifies to

$$\text{RHS} = r(\tau - a)(1 - \tau) + \frac{r(1 - \tau)^2(1 + 2\tau)}{6\tau}.$$

Substituting $a = 5/8$ and $r = 5/6$, the equilibrium threshold $\tau$ is uniquely characterized by the cubic equation

$$48\tau^3 - 77\tau^2 + 58\tau - 20 = 0.$$

This cubic has one real root in the interval $(5/8, 1)$. Numerically, the unique valid solution is $\tau_{\mathrm{FPA}} \approx 0.7309$.

The revenue can be computed using the binding ROS constraint, which implies that each bidder's expected payment equals their expected value conditional on winning. The total revenue is therefore

$$\mathrm{Rev}_{\mathrm{FPA}} = \frac{2}{(1-a)^2} \int_{\tau_{\mathrm{FPA}}}^{1} v(v - a) \, dv \approx 0.8195.$$

\section{Conclusion}

We established a structural equivalence between value maximizers and regular utility maximizers under pricing mechanisms, which allowed us to unify and improve the revenue guarantees of anonymous pricing in heterogeneous markets. Beyond its technical role, this equivalence offers a new conceptual motivation for the regularity assumption: any value maximizer under ROS constraint with an arbitrary continuous valuation distribution naturally induces a regular distribution. In this scenario, regularity arises endogenously rather than being imposed as a modeling convenience.

The most immediate open question is to close the gap between the $1/e$ lower bound and the $1/C^\star$ upper bound. In the pure utility-maximizer setting, the tightness of $C^\star$ was established by \citet{JLQTX19} through a analysis that exploits the structure of Myerson's auction. In our heterogeneous setting, however, the revenue-optimal mechanism lacks a known closed-form characterization, which prevents a direct transplantation of this argument. Progress on this gap likely requires new structural insights into optimal mechanisms for mixed populations.


\bibliographystyle{abbrvnat}
\bibliography{reference}

\end{document}